\newtheorem{theorem}{Theorem}
\theoremstyle{definition}
\begin{document}


\renewcommand{\baselinestretch}{2}

\markright{ \hbox{\footnotesize\rm Statistica Sinica
}\hfill\\[-13pt]
\hbox{\footnotesize\rm
}\hfill }

\markboth{\hfill{\footnotesize\rm Bai Jiang, Tung-yu Wu, Charles Zheng and Wing Wong} \hfill}
{\hfill {\footnotesize\rm Learning Summary Statistic for ABC via DNN} \hfill}

\renewcommand{\thefootnote}{}
$\ $\par


\fontsize{10.95}{14pt plus.8pt minus .6pt}\selectfont
\vspace{0.8pc}
\centerline{\large\bf Learning Summary Statistic for Approximate Bayesian}
\vspace{2pt}
\centerline{\large\bf Computation via Deep Neural Network}
\vspace{.4cm}
\centerline{Bai Jiang, Tung-Yu Wu, Charles Zheng and Wing H. Wong}
\vspace{.4cm}
\centerline{\it Stanford University}
\vspace{.55cm}
\fontsize{9}{11.5pt plus.8pt minus .6pt}\selectfont


\begin{quotation}
\noindent {\it Abstract:}
Approximate Bayesian Computation (ABC) methods are used to approximate posterior distributions in models with unknown or computationally intractable likelihoods. Both the accuracy and computational efficiency of ABC depend on the choice of summary statistic, but outside of special cases where the optimal summary statistics are known, it is unclear which guiding principles can be used to construct effective summary statistics. In this paper we explore the possibility of automating the process of constructing summary statistics by training deep neural networks to predict the parameters from artificially generated data: the resulting summary statistics are approximately posterior means of the parameters. With minimal model-specific tuning, our method constructs summary statistics for the Ising model and the moving-average model, which match or exceed theoretically-motivated summary statistics in terms of the accuracies of the resulting posteriors.\par

\vspace{9pt}
\noindent {\it Key words and phrases:}
Approximate Bayesian Computation, Summary Statistic, Deep Learning
\par
\end{quotation}\par

\def\thefigure{\arabic{figure}}
\def\thetable{\arabic{table}}

\fontsize{10.95}{14pt plus.8pt minus .6pt}\selectfont

\setcounter{chapter}{1}
\setcounter{equation}{0} 
\noindent {\bf 1. Introduction}
\par \smallskip
\noindent {\bf 1.1. Approximate Bayesian Computation}
\par \smallskip
Bayesian inference is traditionally centered around the ability to compute or sample from the posterior distribution of the parameters, having conditioned on the observed data. Suppose data $X$ is generated from a model $\mathcal{M}$ with parameter $\theta$, the prior of which is denoted by $\pi(\theta)$. If the closed form of the likelihood function $l(\theta) = p(X|\theta)$ is available, the posterior distribution of $\theta$ given observed data $x_{obs}$ can be computed via Bayes' rule
$$\pi(\theta|x_{obs}) = \frac{\pi(\theta)p(x_{obs}|\theta)}{p(x_{obs})}.$$
Alternatively, if the likelihood function can only be computed conditionally or up to a normalizing constant, one can still draw samples from the posterior by using stochastic simulation techniques such as Markov Chain Monte Carlo (MCMC) and rejection sampling (\cite{asmussen2007stochastic}).

\lhead[\footnotesize\thepage\fancyplain{}\leftmark]{}\rhead[]{\fancyplain{}\rightmark\footnotesize\thepage}

In many applications, the likelihood function $l(\theta) = p(X|\theta)$ cannot be explicitly obtained, or is intractable to compute; this precludes the possibility of direct computation or MCMC sampling. In these cases, approximate inference can still be performed as long as 1) it is possible to draw $\theta$ from the prior $\pi(\theta)$, and 2) it is possible to simulate $X$ from the model $\mathcal{M}$ given $\theta$, using the methods of Approximate Bayesian Computation (ABC) (See e.g. \cite{beaumont2002approximate, toni2009approximate, lopes2010abc, beaumont2010approximate, csillery2010approximate, marin2012approximate, sunnaaker2013approximate}).

While many variations of the core approach exist, the fundamental idea underlying ABC is quite simple: that one can use rejection sampling to obtain draws from the posterior distribution $\pi(\theta|x_{obs})$ without computing any likelihoods.  We draw parameter-data pairs $(\theta', X')$ from the prior $\pi(\theta)$ and the model $\mathcal{M}$, and accept only the $\theta'$ such that $X' = x_{obs}$, which occurs with conditional probability $P(X = x_{obs}|\theta')$ for any $\theta'$.  Algorithm \ref{alg: ABC1} describes the ABC method for discrete data  (\cite{tavare1997inferring}), which yields an i.i.d. sample $\{\theta^{(i)}\}_{1 \leq i \leq n}$ of the exact posterior distribution $\pi(\theta|X=x_{obs})$.

\begin{algorithm}[h!]
\caption{ABC rejection sampling 1}
\label{alg: ABC1}
\begin{algorithmic}
\For{$i=1,...,n$}
	\Repeat
		\State Propose $\theta' \sim \pi(\theta)$
		\State Draw $X' \sim \mathcal{M}$ given $\theta'$
	\Until{$X' = x_{obs}$ (\textbf{acceptance criterion})}
	\State Accept $\theta'$ and let $\theta^{(i)} = \theta'$
\EndFor
\end{algorithmic}
\end{algorithm}

The success of Algorithm \ref{alg: ABC1} depends on acceptance rate of proposed parameter $\theta'$. For continuous $x_{obs}$ and $X'$, the event $X'=x_{obs}$ happens with probability $0$, and hence Algorithm \ref{alg: ABC1} is unable to produce any draws.  As a remedy, one can relax the acceptance criterion $X'=x_{obs}$ to be $\Vert X'-x_{obs}\Vert < \epsilon$, where $\Vert \cdot \Vert$ is a norm and $\epsilon$ is the tolerance threshold. The choice of $\epsilon$ is crucial for balancing efficiency and approximation error, since with smaller $\epsilon$ the approximation error decreases while the acceptance probability also decreases.
\par \bigskip

\noindent {\bf 1.2. Summary Statistic}
\par \smallskip
When data vectors $x_{obs},X$ are high-dimensional, the inefficiency of rejection sampling in high dimensions results in either extreme inaccuracy, or accuracy at the expense of an extremely time-consuming procedure. To circumvent the problem, one can introduce low-dimensional summary statistic $S$ and further relax the acceptance criterion to be $\Vert S(X')-S(x_{obs})\Vert < \epsilon$. The use of summary statistics results in Algorithm \ref{alg: ABC2}, which was first proposed as the extension of Algorithm \ref{alg: ABC1} in population genetics application (\cite{fu1997estimating, weiss1998inference, pritchard1999population}).

\begin{algorithm}[h!]
\caption{ABC rejection sampling 2}
\label{alg: ABC2}
\begin{algorithmic}
\For{$i=1,...,n$}
	\Repeat
		\State Propose $\theta' \sim \pi$
		\State Draw $X' \sim \mathcal{M}$ with $\theta'$
	\Until{$\Vert S(X')-S(x_{obs})\Vert < \epsilon$ (\textbf{relaxed acceptance criterion})}
	\State Accept $\theta'$ and let $\theta^{(i)} = \theta'$
\EndFor
\end{algorithmic}
\end{algorithm}

Instead of the exact posterior distribution, the resulting sample $\{\theta^{(i)}\}_{1\leq i \leq n}$ obtained by Algorithm \ref{alg: ABC2} follows an approximate posterior distribution
\begin{align}
\pi(\theta|\Vert S(X')-S(x_{obs})\Vert < \epsilon) & \approx \pi(\theta|S(X)=S(x_{obs})) \label{eqn: error due to epsilon}\\
& \approx \pi(\theta|X=x_{obs}). \label{eqn: error due to S}
\end{align}
The choice of the summary statistic is crucial for the approximation quality of ABC posterior distribution. An effective summary statistic should offer a good trade-off between two approximation errors (\cite{blum2013comparative}). The approximation error (\ref{eqn: error due to epsilon}) is introduced when one replaces ``equal" with ``similar" in the first relaxation of the acceptance criterion. Under appropriate regularity conditions, it vanishes as $\epsilon \to 0$. The approximation error (\ref{eqn: error due to S}) is introduced when one compares summary statistics $S(X)$ and $S(x_{obs})$ rather than the original data $X$ and $x_{obs}$. In essence, this is just the information loss of mapping high-dimensional $X$ to low-dimensional $S(X)$. A summary statistic $S$ of higher dimension is in general more informative, hence reduces the approximation error (\ref{eqn: error due to S}). At the same time, increasing the dimension of the summary statistic slows down the rate that the approximation error (\ref{eqn: error due to epsilon}) vanishes in the limit of $\epsilon \to 0$. Ideally, we seek a statistic which is simultaneously low-dimensional and informative.

A sufficient statistic is an attractive option, since sufficiency, by definition, implies that the approximation error (\ref{eqn: error due to S}) is zero (\cite{kolmogorov1942definition, lehmann2006theory}).  However, the sufficient statistic has generally the same dimensionality as the sample size, except in special cases such as exponential families. And even when a low-dimensional sufficient statistic exists, it may be intractable to compute.

The main task of this article is to construct low-dimensional and informative summary statistics for ABC methods. Since our goal is to compare methods of constructing summary statistics (rather than present a complete methodology for ABC), the relatively simple Algorithm \ref{alg: ABC2} suffices. In future work, we plan to use our approach for constructing summary statistics alongside more sophisticated variants of ABC methods, such as those which combine ABC with Markov chain Monte Carlo or sequential techniques (\cite{marjoram2003markov, sisson2007sequential}). Hereafter all ABC procedures mentioned use Algorithm \ref{alg: ABC2}.
\par \bigskip

\noindent {\bf 1.3. Related Work and Our DNN Approach}
\par \smallskip
Existing methods for constructing summary statistics can be roughly classified into two classes, both of which require a set of candidate summary statistics $\mathcal{S}_c = \{S_{c,k}\}_{1 \le k \le K}$ as input. The first class consists of approaches for \textit{best subset selection}. Subsets of $\mathcal{S}_c$ are evaluated according to various information-based criteria, e.g. measure of sufficiency (\cite{joyce2008approximately}), entropy (\cite{nunes2010optimal}), Akaike and Bayesian information criteria (\cite{blum2013comparative}), and the ``best" subset is chosen to be the summary statistic. The second class is \textit{linear regression} approach, which constructs summary statistics by linear regression of response $\theta$ on candidate summary statistics $\mathcal{S}_c$ (\cite{wegmann2009efficient, fearnhead2012constructing}). Regularization techniques have also been considered to reduce overfitting in the regression models (\cite{blum2013comparative}). Many of these methods rely on expert knowledge to provide candidate summary statistics.

In this paper, we propose to automatically learn summary statistics for high-dimensional $X$ by using deep neural networks (DNN). Here DNN is expected to effectively learn a good approximation to the posterior mean $\mathbb{E}_\pi [\theta|X]$ when constructing a minimum squared error estimator $\hat{\theta}(X)$ on a large data set $\{(\theta^{(i)},X^{(i)})\}_{1 \le i \le N} \sim \pi \times \mathcal{M}$. The minimization problem is given by
$$\min_{\beta} \frac{1}{N} \sum_{i=1}^N \left\Vert f_\beta(X^{(i)}) - \theta^{(i)} \right\Vert_2^2,$$
where $f_\beta$ denotes a DNN with parameter $\beta$. The resulting estimator $\hat{\theta}(X) = f_{\hat{\beta}}(X)$ approximates $\mathbb{E}_\pi [\theta|X]$ and further serves as the summary statistic for ABC.

Our motivation for using (an approximation to) $\mathbb{E}_\pi [\theta|X]$ as a summary statistic for ABC is inspired by the semi-automatic method in (\cite{fearnhead2012constructing}). Their idea is that $\mathbb{E}_\pi[\theta|X]$ as summary statistic leads to an ABC posterior, which has the same mean as the exact posterior in the limit of $\epsilon \to 0$. Therefore they proposed to linearly regress $\theta$ on candidate summary statistics $\mathcal{S}_c = \{S_{c,k}\}_{1 \le k \le K}$
$$\min_{\beta} \frac{1}{N} \sum_{i=1}^N \left\Vert \beta_0 + \sum_{k=1}^K \beta_k S_{c,k}(X^{(i)}) - \theta^{(i)} \right\Vert_2^2,$$
and to use the resulting minimum squared error estimator $\hat{\theta}(X)$ as the summary statistic for ABC. In their semi-automatic method, $\mathcal{S}_c$ could be expert-designed statistics or polynomial bases (e.g. power terms of each component $X_j$).

Our DNN approach aims to achieve a more accurate approximation $\hat{\theta}(X) \approx \mathbb{E}_\pi[\theta|X]$ and a higher degree of automation in constructing summary statistics than the semi-automatic method. First, DNN with multiple hidden layers offers stronger representational power, compared to the semi-automatic method using linear regression. A DNN is expected to better approximate $\mathbb{E}_\pi[\theta|X]$ if the posterior mean is a highly non-linear function of $X$. Second, DNNs simply use the original data vector $X$ as the input, and automatically learn the appropriate nonlinear transformations as summaries from the raw data, in contrast to the semi-automatic method and many other existing methods requiring a set of expert-designed candidate summary statistics or a basis expansion. Therefore our approach achieves a higher degree of automation in constructing summary statistics.

\cite{blum2010non} have considered fitting a feed-forward neural network (FFNN) with single hidden layer by regressing $\theta^{(i)}$ on $X^{(i)}$. Their method significantly differs from ours, as theirs was originally motivated by reducing the error between the ABC posterior and the true posterior, rather than constructing summary statistics. Specifically, their method assumes that the appropriate summary statistic $S$ has already been given, and adjusts each draw $(\theta,X)$ from the ABC procedure using summary statistic $S$ in the way
$$\theta^* = m(S(x_{obs})) + \left[\theta - m(S(X))\right] \times \frac{\sigma(S(x_{obs}))}{\sigma(S(X))}.$$
Both $m(\cdot)$ and $\sigma(\cdot)$ are non-linear functions represented by FFNNs. Another key difference is the network size: the FFNNs in \cite{blum2010non} contained four hidden neurons in order to reduce dimensionality of summary statistics, while our DNN approach contains hundreds of hidden neurons in order to gain representational power.
\par \bigskip

\noindent {\bf 1.4. Organization}
\par \smallskip
The rest of the article is organized as follows. In Section 2,  we show how to approximate the posterior mean $\mathbb{E}_\pi[\theta|X]$ by training DNNs. In Sections 3 and 4, we report simulation studies on the Ising model and the moving average model of order 2, respectively. We describe in the supplementary materials the implementation details of DNNs and how consistency can be obtained by using the posterior mean of a basis of functions of the parameters.

\par \bigskip

\setcounter{chapter}{2}
\setcounter{equation}{0} 
\noindent {\bf 2. Methods}
\par \smallskip
Throughout the paper, we denote by $X \in \mathbb{R}^p$ the data, and by $\theta \in \mathbb{R}^q$ the parameter. We assume it is possible to obtain a large number of independent draws $X$ from the model $\mathcal{M}$ given $\theta$ despite the unavailability of $p(X|\theta)$. Denote by $x_{obs}$ the observed data, $\pi$ the prior of $\theta$, $S$ the summary statistic, $\Vert \cdot \Vert$ the norm to measure $S(X)-S(x_{obs})$, and $\epsilon$ the tolerance threshold. Let $\pi_{ABC}^\epsilon(\theta) = \pi(\theta| \Vert S(X)-S(x_{obs})\Vert < \epsilon)$ denote the approximate posterior distribution obtained by Algorithm \ref{alg: ABC2}.

The main task is to construct a low-dimensional and informative summary statistic $S$ for high-dimensional $X$, which will enable accurate approximation of $\pi_{ABC}^\epsilon$. We are interested mainly in the regime where ABC is most effective: settings in which the dimension of $X$ is moderately high (e.g. $p=100$) and the dimension of $\theta$ is low (e.g. $q=1,2,3$). Given a prior $\pi$ for $\theta$, our approach is as follows.
\begin{enumerate}[(1)]
\item Generate a data set $\left\{(\theta^{(i)},X^{(i)})\right\}_{1 \leq i \leq N}$ by repeatedly drawing $\theta^{(i)}$ from $\pi$ and drawing $X^{(i)}$ from $\mathcal{M}$ with $\theta^{(i)}$.
\item Train a DNN with $\{X^{(i)}\}_{1\leq i \leq N}$ as input and $\{\theta^{(i)}\}_{1 \leq i \leq N}$ as target.
\item Run ABC Algorithm \ref{alg: ABC2} with prior $\pi$ and the DNN estimator $\hat{\theta}(X)$ as summary statistic.
\end{enumerate}
Our motivation for training such a DNN is that the resulting statistic (estimator) should approximate the posterior mean $S(X) = \hat{\theta}(X) \approx \mathbb{E}_\pi[\theta|X]$.
\par \bigskip

\noindent {\bf 2.1 Posterior Mean as Summary Statistic}
\par \smallskip
The main advantage of using the posterior mean $\mathbb{E}_\pi[\theta|X]$ as a summary statistic is that the ABC posterior $\pi_{ABC}^\epsilon(\theta) = \pi(\theta|\Vert S(X) - S(x_{obs})\Vert < \epsilon)$ will then have the same mean as the exact posterior in the limit of $\epsilon \to 0$. That is to say, $\mathbb{E}_\pi[\theta|X]$ does not lose any first-order information when summarizing $X$.

This theoretical result has been discussed in Theorem 3 in \cite{fearnhead2012constructing}, but their proof is not rigorous. We provide in Theorem \ref{Theorem1} a more rigorous and general proof.

\begin{theorem} \label{Theorem1}
If $\mathbb{E}_\pi \left[|\theta|\right] < \infty$, then $S(x) = \mathbb{E}_\pi[\theta|X=x]$ is well defined. The ABC procedure with observed data $x_{obs}$, summary statistics $S$, norm $\Vert \cdot \Vert$, and tolerance threshold $\epsilon$ produces a posterior distribution
$$\pi_{ABC}^\epsilon(\theta) = \pi(\theta|\Vert S(X) - S(x_{obs})\Vert < \epsilon),$$
with
$$\Vert \mathbb{E}_{\pi_{ABC}^\epsilon} [\theta] - S(x_{obs}) \Vert < \epsilon,$$
$$\lim_{\epsilon \to 0}  \mathbb{E}_{\pi_{ABC}^\epsilon} [\theta] = \mathbb{E}_\pi[\theta|X=x_{obs}].$$
\end{theorem}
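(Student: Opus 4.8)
The plan is to avoid densities entirely and deduce everything from the law of total expectation together with the triangle inequality for vector-valued integrals; with this approach the only hypothesis needed is the stated integrability of $\theta$.

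First I would settle the measure-theoretic preliminaries. Under $\mathbb{E}_\pi[|\theta|]<\infty$, the conditional expectation $\mathbb{E}_\pi[\theta\mid X=x]$ exists for almost every $x$ with respect to the marginal law of $X$ under $\pi\times\mathcal{M}$; I would fix a specific version $S(\cdot)$ (e.g.\ the one induced by a regular conditional distribution of $\theta$ given $X$) so that $S(X)=\mathbb{E}_\pi[\theta\mid X]$ holds $\pi\times\mathcal{M}$-almost surely and $S(x_{obs})$ is a genuine vector, and I would record that the statement implicitly assumes $\pi\times\mathcal{M}\bigl(\Vert S(X)-S(x_{obs})\Vert<\epsilon\bigr)>0$, since otherwise $\pi_{ABC}^\epsilon$ is not defined. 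Writing $A_\epsilon=\{x:\Vert S(x)-S(x_{obs})\Vert<\epsilon\}$, the distribution $\pi_{ABC}^\epsilon$ is by construction the conditional law of $\theta$ given the $\sigma(X)$-event $\{X\in A_\epsilon\}$ under $\pi\times\mathcal{M}$.

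Next I would compute the ABC posterior mean. Since $\{X\in A_\epsilon\}\in\sigma(X)$ and $\theta$ is integrable, the tower property gives
$$\mathbb{E}_{\pi_{ABC}^\epsilon}[\theta]
=\frac{\mathbb{E}\bigl[\theta\,\mathbf{1}_{\{X\in A_\epsilon\}}\bigr]}{\pi\times\mathcal{M}(X\in A_\epsilon)}
=\frac{\mathbb{E}\bigl[S(X)\,\mathbf{1}_{\{X\in A_\epsilon\}}\bigr]}{\pi\times\mathcal{M}(X\in A_\epsilon)}
=\mathbb{E}\bigl[S(X)\mid X\in A_\epsilon\bigr].$$
On $\{X\in A_\epsilon\}$ the vector $S(X)-S(x_{obs})$ has norm strictly below $\epsilon$ and is bounded there, so applying Jensen's inequality to the convex map $v\mapsto\Vert v\Vert$ (equivalently, the triangle inequality for integrals) yields
$$\Vert\mathbb{E}_{\pi_{ABC}^\epsilon}[\theta]-S(x_{obs})\Vert
=\Bigl\Vert\mathbb{E}\bigl[S(X)-S(x_{obs})\mid X\in A_\epsilon\bigr]\Bigr\Vert
\le\mathbb{E}\bigl[\Vert S(X)-S(x_{obs})\Vert\mid X\in A_\epsilon\bigr]<\epsilon,$$
the final inequality being strict because the integrand stays strictly below $\epsilon$ on the entire conditioning event. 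That is the first displayed claim; and since $S(x_{obs})=\mathbb{E}_\pi[\theta\mid X=x_{obs}]$, letting $\epsilon\to0$ in this bound immediately gives $\lim_{\epsilon\to0}\mathbb{E}_{\pi_{ABC}^\epsilon}[\theta]=\mathbb{E}_\pi[\theta\mid X=x_{obs}]$.

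I do not expect a serious obstacle: the content is essentially the clean setup above. The one point requiring care is the measure-theoretic bookkeeping at the start — choosing a version of $\mathbb{E}_\pi[\theta\mid X=\cdot]$ for which evaluation at the single (possibly null) point $x_{obs}$ is meaningful and with which the tower property may be invoked — together with the non-degeneracy hypothesis $\pi\times\mathcal{M}(X\in A_\epsilon)>0$ that makes $\pi_{ABC}^\epsilon$ a probability measure. This is presumably where the argument of \cite{fearnhead2012constructing} is informal, since it passes through the intermediate conditioning on $\{S(X)=S(x_{obs})\}$ and works with densities rather than with $\epsilon$-balls directly.
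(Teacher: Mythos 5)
Your proof is correct and follows essentially the same route as the paper's: identify $\mathbb{E}_{\pi_{ABC}^\epsilon}[\theta]$ with $\mathbb{E}[S(X)\mid A_\epsilon]$ via the defining property of conditional expectation, apply Jensen's inequality to get the strict $\epsilon$-bound, and let $\epsilon\to 0$. The only cosmetic difference is that you condition directly on the $\sigma(X)$-event $\{X\in A_\epsilon\}$, whereas the paper first shows $S(X)=\mathbb{E}_\pi[\theta\mid S(X)]$ and uses $A_\epsilon\in\sigma(S(X))$; your added remarks on fixing a version of $S$ and on the non-degeneracy condition $\pi\times\mathcal{M}(A_\epsilon)>0$ are sensible bookkeeping, not a change of argument.
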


\begin{proof}
First, we show $S(X) = \mathbb{E}_\pi [\theta | X]$ is a version of conditional expectation of $\theta$ given $S(X)$. Denote by $\sigma(X), \sigma(S(X))$ the $\sigma$-algebras of $X$ and $S(X)$, respectively. $S(X)$ is clearly measurable with respect to $\sigma(X)$, thus $\sigma(S(X)) \subseteq \sigma(X)$. Then
\begin{align}
S(X) &= \mathbb{E}_\pi [S(X)| S(X)] \tag*{[$S(X)$ is known in $\sigma(S(X))$]}\\
&=  \mathbb{E}_\pi [ \mathbb{E}_\pi[\theta|X] | S(X)] \tag*{[Definition of $S$]}\\
&= \mathbb{E}_\pi [\theta |S(X)] \tag*{[Tower property, $\sigma(S(X)) \subseteq \sigma(X)$]}
\end{align}
As $A = \{ \Vert S(X) - S(x_{obs}) \Vert < \epsilon \} \in \sigma(S(X))$, we have by the definition of conditional expectation
$$\mathbb{E}_\pi [ \theta \mathbb{I}_A] = \mathbb{E}_\pi [ S(X) \mathbb{I}_A].$$
It follows that
$$\mathbb{E}_{\pi_{ABC}^\epsilon} [\theta] = \mathbb{E}_\pi \left[ \theta | A \right] = \mathbb{E}_\pi \left[ S(X) | A \right],$$
implying by Jensen's inequality that
\begin{align*}
\Vert \mathbb{E}_{\pi_{ABC}^\epsilon} [\theta] - S(x_{obs})\Vert
&= \Vert \mathbb{E}_\pi \left[ S(X) | A \right] - S(x_{obs})\Vert\\
&\leq \mathbb{E}_\pi \left[ \left. \Vert S(X) - S(x_{obs})\Vert \right| A\right]\\
&< \epsilon
\end{align*}
Letting $\epsilon \to 0$ yields $\mathbb{E}_{\pi_{ABC}^\epsilon}[\theta] \to S(x_{obs})= \mathbb{E}_\pi\left[\theta|X=x_{obs}\right]$.
\end{proof}

ABC procedures often give the sample mean of the ABC posterior as the point estimate for $\theta$. Theorem \ref{Theorem1} shows ABC procedure using $\mathbb{E}_\pi[\theta|X]$ as the summary statistic maximizes the point-estimation accuracy in the sense that the exact mean of ABC posterior $\mathbb{E}_{\pi_{ABC}^\epsilon} [\theta]$ is an $\epsilon$-approximation to the Bayes estimator $\mathbb{E}_\pi[\theta|X=x_\text{obs}]$ under squared error loss.

Users of Bayesian inference generally desire more than just point estimates: ideally, one approximates the posterior $\pi(\theta|x_{obs})$ globally. We observe that such a global approximation result is possible when extending Theorem \ref{Theorem1}: if one considers a basis of functions on the parameters, $b(\theta) = (b_1(\theta), ... , b_K(\theta))$, and uses the $K$-dimensional statistic $\mathbb{E}_\pi [ b(\theta)|X]$ as the summary statistic(s), the ABC posterior weakly converges to the exact posterior as $\epsilon \to 0$ and $K \to \infty$ at the appropriate rate. We state this result in the supplementary material.

There is a nice connection between the posterior mean and the sufficient statistics, especially minimal sufficient statistics in the exponential family. If there exists a sufficient statistic $S^*$ for $\theta$, then from the concept of the sufficiency in the Bayesian context (\cite{kolmogorov1942definition}) it follows that for almost every $x$, $\pi(\theta|X=x) = \pi(\theta|S^*(X)=S^*(x))$, and further $S(x) = \mathbb{E}_\pi[\theta|X=x] = \mathbb{E}_\pi[\theta|S^*(X)=S^*(x)]$ is a function of $S^*(x)$. In the special case of an exponential family with minimal sufficient statistic $S^*$ and parameter $\theta$, the posterior mean $S(X) = \mathbb{E}_\pi[\theta|X]$ is a one-to-one function of $S^*(X)$, and thus is a minimal sufficient statistic.
\par \bigskip

\noindent {\bf 2.2. Structure of Deep Neural Network}
\par \smallskip
At a high level, a deep neural network merely represents a non-linear function for transforming input vector $X$ into output $\hat{\theta}(X)$.  The structure of a neural network can be described as a series of $L$ nonlinear transformations applied to $X$. Each of these $L$ transformations is described as a \emph{layer}: where the original input is $X$, the output of the first transformation is the 1st layer, the output of the second transformation is the 2nd layer, and so on, with the output as the $(L+1)$th layer.  The layers 1 to $L$ are called \emph{hidden layers} because they represent intermediate computations, and we let $H^{(l)}$ denote the $l$-th hidden layer. Then the explicit form of the network is
\begin{align*}
H^{(1)} &= \tanh\left(W^{(0)}H^{(0)}+b^{(0)}\right),\\
H^{(2)} &= \tanh\left(W^{(1)}H^{(1)}+b^{(1)}\right),\\
...\\
H^{(L)} &= \tanh\left(W^{(L-1)}H^{(L-1)}+b^{(L-1)}\right),\\
\hat{\theta} &= W^{(L)} H^{(L)} + b^{(L)}.
\end{align*}
where $H^{(0)}=X$ is the input, $\hat{\theta}$ is the output, $W^{(l)}$ and $b^{(l)}$ are the parameters controlling how the inputs of layer $l$ are transformed into the outputs of layer $l$. Let $n^{(l)}$ denote the size of the $l$-th layer: then $W^{(l)}$ is an $n^{(l+1)} \times n^{(l)}$ matrix, called the \emph{weight matrix}, and $b^{(l)}$ is an $n^{(l+1)}$-dimensional vector, called the \emph{bias vector}. The $n^{(l)}$ components of each layer $H^{(l)}$ are also described evocatively as ``neurons'' or ``hidden units". Figure~\ref{fig: dnn} illustrates an example of 3-layer DNN with input $X \in \mathbb{R}^4$ and 5/5/3 neurons in the 1st/2nd/3rd hidden layer, respective.
\begin{figure}[h!]
\begin{center}
\includegraphics[width=1.0\textwidth]{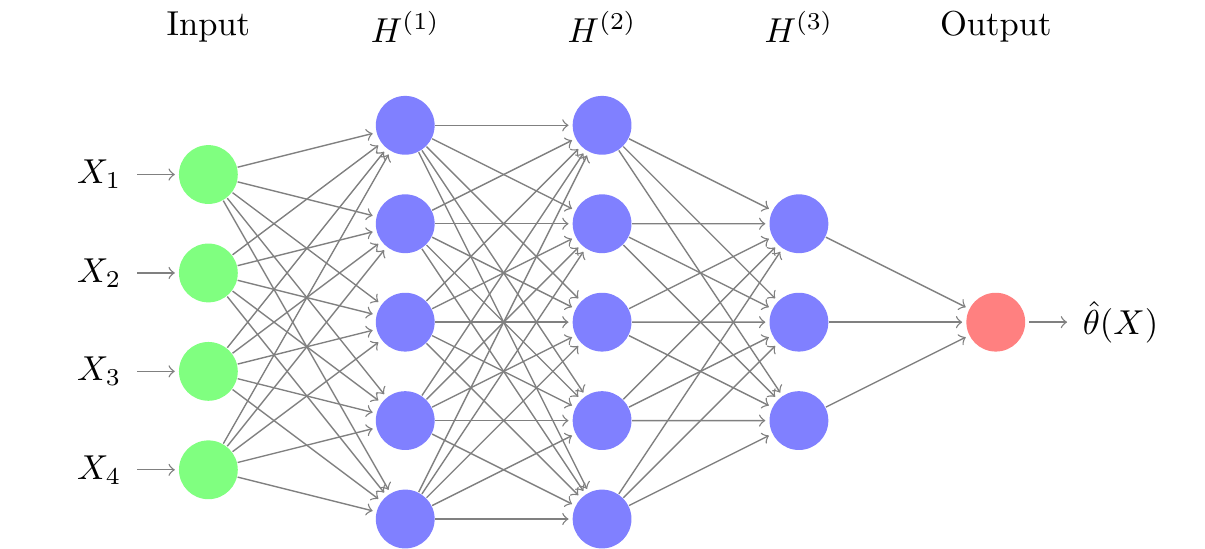}
\end{center}
\caption{An example of DNN with three hidden layers.} \label{fig: dnn}
\end{figure}

\begin{figure}[h!]
\begin{center}
\includegraphics[width=0.9\textwidth]{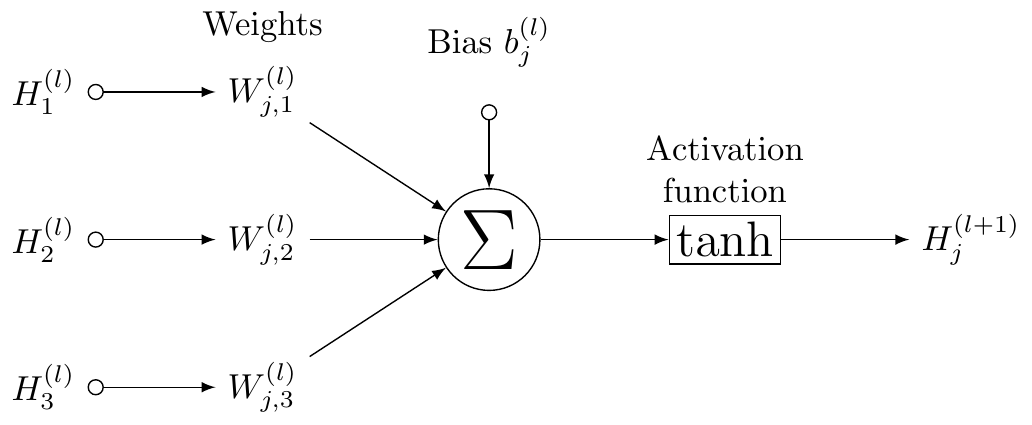}
\end{center}
\caption{Neuron $j$ in the hidden layer $l+1$} \label{fig: neuron}
\end{figure}

The role of layer $l+1$ is to apply a nonlinear transformation to the outputs of layer $l$, $H^{(l)}$, and then output the transformed outputs as $H^{(l+1)}$.  First, a linear transformation is applied to the previous layer $H^{(l)}$, yielding $W^{(l)} H^{(l)} + b^{(l)}$.  The nonlinearity (in this case $\tanh$) is applied to each element of $W^{(l)} H^{(l)} + b^{(l)}$ to yield the output of the current layer, $H^{(l+1)}$. The nonlinearity is traditionally called the ``activation'' function, drawing an analogy to the properties of biological neurons. We choose the function $\tanh$ as an activation function due to smoothness and computational convenience. Other popular choices for activation function are $\text{sigmoid}(t) = \frac{1}{1+\exp (-t)}$ and $\text{ReLU}(t) = \max\{t,0\}$. To better explain the activity of each individual neuron, we illustrate how neuron $j$ in the hidden layer $l+1$ works in Figure~\ref{fig: neuron}.

The output layer takes the top hidden layer $H^{(L)}$ as input and predicts $\hat{\theta} = W^{(L)} H^{(L)} + b^{(L)}$. In many existing applications of deep learning (e.g. computer vision and natural language processing), the goal is to predict a categorical target. In those cases, it is common to use a $\text{softmax}$ transformation in the output layer.  However, since our goal is prediction rather than classification, it suffices to use a linear transformation.
\par \bigskip

\noindent {\bf 2.3. Approximating Posterior Mean by DNN}
\par \smallskip

We use the DNN to construct a summary statistic: a function which maps $x$ to an approximation of $\mathbb{E}_\pi[\theta|X]$.  First, we generate a training set $\mathcal{D}_\pi = \left\{(\theta^{(i)},X^{(i)}), 1 \leq i \leq N \right\}$ by drawing samples from the joint distribution $\pi(\theta, x)$. Next, we train the DNN to minimize the squared error loss between training target $\theta^{(i)}$ and estimation $\hat{\theta}(X^{(i)})$. Thus we minimize (\ref{eqn: loss}) with respect to the DNN parameters $\beta = (W^{(0)}, b^{(0)},...,W^{(L)},b^{(L)})$,
\begin{equation}\label{eqn: loss}
J(\beta) = \frac{1}{N} \sum_{i=1}^N \Vert f_\beta(X^{(i)}) - \theta^{(i)}\Vert_2^2.
\end{equation}
We compute the derivatives using backpropagation (\cite{lecun1998gradient}) and optimize the objective function by stochastic gradient descent method. See the supplementary material for details.

Our approach is based on the fact that any function which minimizes the squared error risk for predicting $\theta$ from $X$ may be viewed as an approximation of the posterior mean $\mathbb{E}_\pi[\theta|X]$. Hence, any supervised learning approach could be used to construct a prediction rule for predicting $\theta$ from $x$, and thereby provide an approximation of $\mathbb{E}_\pi[\theta|X]$. Since in many applications of ABC, we can expect $\mathbb{E}_\pi[\theta|X]$ to be a highly nonlinear and smooth function, it is important to choose a supervised learning approach which has the power to approximate such nonlinear smooth functions.

DNNs appear to be a good choice given their rich representational power for approximating nonlinear functions. More and more practical and theoretical results of deep learning in several areas of machine learning, especially computer vision and natural language processing (\cite{hinton2006reducing, hinton2006fast, bengio2013representation, schmidhuber2015deep}), show that deep architectures composed of simple learning modules in multiple layers can model high-level abstraction in high-dimensional data. It is speculated that by increasing the depth and width of the network, the DNN gains the power to approximate any continuous function; however, rigorous proof of the approximation properties of DNNs remains an important open problem (\cite{farago1993strong, sutskever2008deep, le2010deep}). Nonetheless we expect that DNNs can effectively learn a good approximation to the posterior mean $\mathbb{E}_\pi [\theta|X]$ given a sufficiently large training set.
\par \bigskip

\noindent {\bf 2.4. Avoiding Overfitting}
\par \smallskip
DNN consists of simple learning modules in multiple layers and thus has very rich representational power to learn very complicated relationships between the input $X$ and the output $\theta$. However, DNN is prone to overfitting given limited training data.
In order to avoid overfitting, we consider three methods: generating a large training set, early stopping, and regularization on parameter. 

\textbf{Sufficiently Large Training Data}. This is the fundamental way to avoid overfitting and improve the generalization, that, however, is impossible in many applications of machine learning. Fortunately, in applications of Approximate Bayesian Computation, an arbitrarily large training set can be generated by repeatedly sampling $(\theta^{(i)},X^{(i)})$ from the prior $\pi$ and the model $\mathcal{M}$, and dataset sampling can be parallelized. In our experiments, DNNs contains $3$ hidden layers, each of which has 100 neurons, and has around $3 \times 100 \times 100 = 3 \times 10^4$ parameters, while the training set contains $10^6$ data samples.

\textbf{Early Stopping} (\cite{caruana2001overfitting}). This divides the available data into three subsets: the training set, the validation set and the testing set. The training set is used to compute the gradient and update the parameter. At the same time, we monitor both the training error and the validation error. The validation error usually decreases as does the training error in the early phase of the training process. However, when the network begins to overfit, the validation error begins to increase and we stop the training process. The testing error is reported only for evaluation.

\textbf{Regularization}. This adds an extra term to the loss function that will penalize complexity in neural networks (\cite{nowlan1992simplifying}). Here we consider $\mathcal{L}_2$ regularization (\cite{ng2004feature}) and minimize the objective function
\begin{equation}\label{eqn: loss with penality}
J(\beta; \lambda) = \frac{1}{N} \sum_{i=1}^N \Vert f_\beta(X^{(i)}) - \theta^{(i)}\Vert_2^2 + \lambda \sum_{l=1}^L \Vert W^{(l)}\Vert_\text{F}^2
\end{equation}
where $\Vert W\Vert_\text{F}$ is the Frobenius norm of $W$, the square root of the sum of the absolute squares of its elements.

More sophisticated methods like dropout (\cite{srivastava2014dropout}) and tuning network size can probably better combat overfitting and learn better summary statistic. We only use the simple methods and do minimal model-specific tuning in the simulation studies. Our goal is to show a relatively simple DNN can learn a good summary statistic for ABC.
\par \bigskip

\setcounter{chapter}{3}
\setcounter{equation}{0} 
\noindent {\bf 3. Example: Ising Model}
\par \smallskip
\noindent {\bf 3.1 ABC and Summary Statistics}
\par \smallskip
The Ising model consists of discrete variables ($+1$ or $-1$) arranged in a lattice (Figure~\ref{fig: Iattice}). Each binary variable, called a spin, is allowed to interact with its neighbors.
\begin{figure}[h!]
\begin{center}
\includegraphics[width=0.4\textwidth]{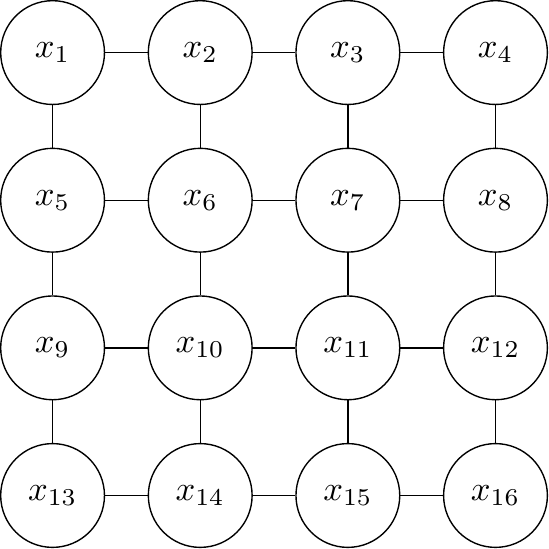}
\end{center}
\caption{Ising model on $4\times 4$ lattice.}
\label{fig: Iattice}
\end{figure}
The inverse-temperature parameter $\theta>0$ characterizes the extent of interaction. Given $\theta$, the probability mass function of the Ising model on $m\times m$ lattice is
$$p(X|\theta) = \frac{ \exp \left(\theta \sum_{j \sim k} X_jX_k \right)}{ Z(\theta)}$$
where $X_j \in \{-1,+1\}$, $j \sim k$ means $X_j$ and $X_k$ are neighbors, and the normalizing constant is
$$Z(\theta) = \sum_{x' \in \{-1,+1\}^{m \times m}} \exp \left(\theta\sum_{j \sim k} x'_jx'_k\right).$$
Since the normalizing constant requires an exponential-time computation, the probability mass function $p(x|\theta)$ is intractable except in small cases.

Despite the unavailability of probability mass function, data $X$ can be still simulated given $\theta$ using Monte Carlo methods such as Metropolis algorithm (\cite{asmussen2007stochastic}). It allows use of ABC for parameter inference. The sufficient statistic $S^*(X) = \sum_{j \sim k} X_jX_k$ is the ideal summary statistic, because $S^*$ is univariate, speeds up the convergence of approximation error (\ref{eqn: error due to epsilon}) in the limit of $\epsilon \to 0$, and losses no information in the approximation (\ref{eqn: error due to S}).

Since $S^*$ results in the ABC posterior with the highest quality, we take it as the gold standard and compare the DNN-based summary statistic to it. The DNN-based summary statistic, if approximating $\mathbb{E}_\pi[\theta|X]$ well, should be an approximately increasing function of $S^*(X)$. As $\mathbb{E}_\pi[\theta|X]$ is an increasing function of $S^*(X)$, it is a sufficient statistic as well. To see this, view the posterior as an exponential family with $S^*(X)$ as ``parameter" and $\theta$ as ``sufficient statistic",
$$\pi(\theta|X) \propto \pi(\theta)p_\theta(X) = \underbrace{\pi(\theta) e^{-\log Z(\theta)}}_{\text{carrier measure}} \exp{\left(S^*(X) \cdot\theta\right)},$$
and then use the mean reparametrization result of exponential family. As $\mathbb{E}_\pi[\theta|X]$ and $S^*(X)$ are highly non-linear functions in the high-dimensional space $\{-1,+1\}^{m \times m}$, they are challenging to approximate.
\par \bigskip

\noindent {\bf 3.2 Experimental Design}
\par \smallskip
\begin{figure}[h!]
\begin{center}
\includegraphics[width=1.0\textwidth]{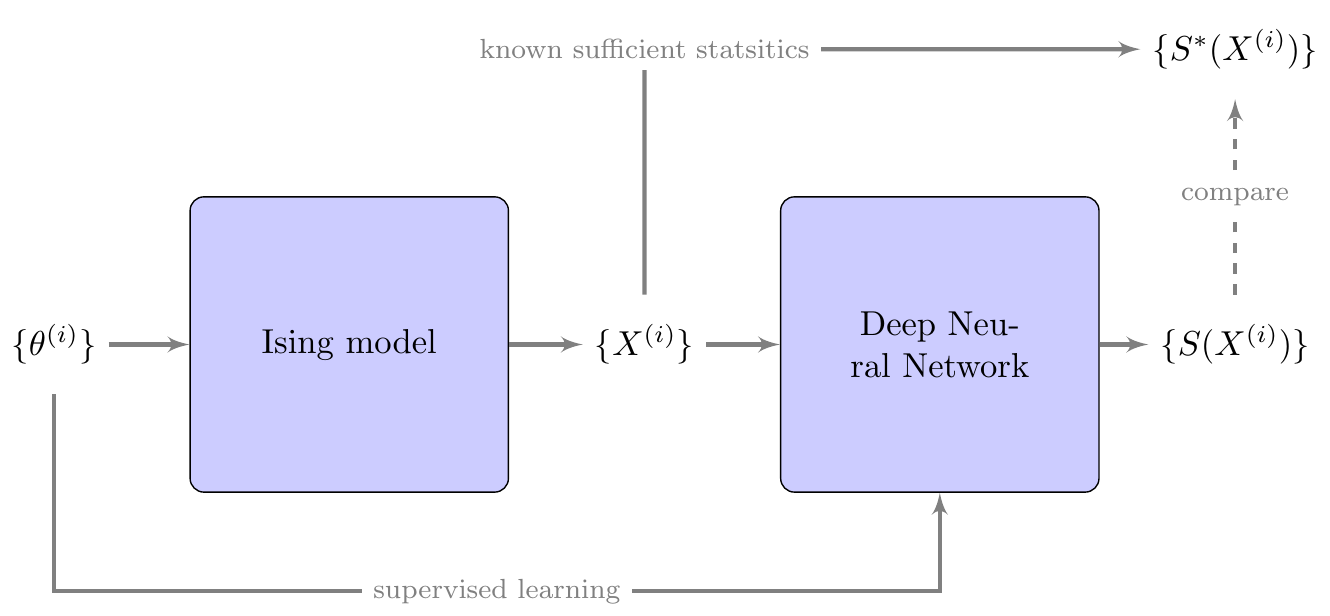}\\
\end{center}
\caption{Experimental design on Ising model}
\label{fig: scheme}
\end{figure}
Figure~\ref{fig: scheme} outlines the whole experimental scheme. We generate a training set by the Metropolis algorithm, train a DNN to learn a summary statistic $S$, and then compare $S(X)$ to the ``gold standard" $S^*(X)$.

Metropolis algorithm generated training, validation, testing sets of size $10^6$,  $10^5$, $10^5$, respectively, from the Ising model on the $10\times10$ lattice with a prior $\pi(\theta) \sim \text{Exp}(\theta_c)$. The value $\theta_c = 0.4406$ is the phase transition point of Ising model on infinite lattice: when $\theta < \theta_c$, the spins tend to be disordered; when $\theta > \theta_c$ is large enough, the spins tend to have the same sign due to the strong neighbor-to-neighbor interactions (\cite{onsager1944crystal}). The Ising model on a finite lattice undergoes a smooth phase transition around $\theta_c$ as $\theta$ increases, which is slightly different than the sharp phase transition on infinite lattice (\cite{landau1976finite}).

A 3-layer DNN with $100$ neurons on each hidden layer was trained to predict $\theta$ from $X$. For the purpose of comparison, the semi-automatic method with components of raw vector $X$ as candidate summary statistics was used. We also tested an FFNN with a single hidden layer of 100 neurons and considered the regularization technique (\ref{eqn: loss with penality}) with $\lambda = 0.001$. The FFNN used is totally different from that used by \cite{blum2010non}. See details in Section 1.3.

Summary statistics learned by different methods led to different ABC posteriors. They were compared to those ABC posteriors resulting from the ideal summary statistic $S^*$.

\par \bigskip

\noindent {\bf 3.3 Results}
\par \smallskip
As shown in Table \ref{tab: Ising prediction}, DNN learns a better prediction rule than the semi-automatic method and FFNN, although it takes more training time. The regularization technique does not improve the performance, probably because overfitting is not a significant issue given that the training data ($N=10^6$) outnumbers the $\approx 3 \times 10^4$ parameters. 

\begin{table}[h!]
\centering
\begin{tabular}{l | c c c}
\hline
Method & Training RMSE & Testing RMSE & Time (s)\\
\hline
Semi-automatic & 0.4401 & 0.4406 & 4.36\\
FFNN, $\lambda=0$  & 0.2541 & 0.2541 & 480.08\\
DNN, $\lambda=0$ & \textbf{0.2319} & \textbf{0.2318} & 1348.17\\
\hline
FFNN, $\lambda=0.001$ & 0.2583 & 0.2584 & 447.07\\
DNN, $\lambda=0.001$ & 0.2514 & 0.2512 & 1378.33\\
\hline
\end{tabular}
\caption{The root-mean-square error (RMSE) and training time of the semi-automatic, FFNN, and DNN methods to predict $\theta$ given $X$. $\lambda$ is the penalty coefficient in the regularized objective function (\ref{eqn: loss with penality}). Stochastic gradient descent fits each FFNN or DNN by 200 full passes (epochs) through the training set.} \label{tab: Ising prediction}
\end{table}

Figure~\ref{fig: Ising scatterplot, deeplearn} displays a scatterplot which compares the DNN-based summary statistic $S$ and the sufficient statistic $S^*$. Points in the scatterplot represent to $(S^*(x),S(x))$ for an instance $x$ in the testing set. A large number of the instances are concentrated at $S^* = 192, 200$, which appear as points in the top-right corner of the scatterplot. These instances are relatively uninteresting, so we display a heatmap of $(S(x), S^*(x))$ excluding them in Figure~\ref{fig: Ising heatmap, deeplearn}. It shows that the DNN-based summary statistic $S(X)$ approximates an increasing function of $S^*(X)$.

\begin{figure}[h!]
    \centering
    \begin{subfigure}[b]{0.42\textwidth}
        \includegraphics[height=2.5in]{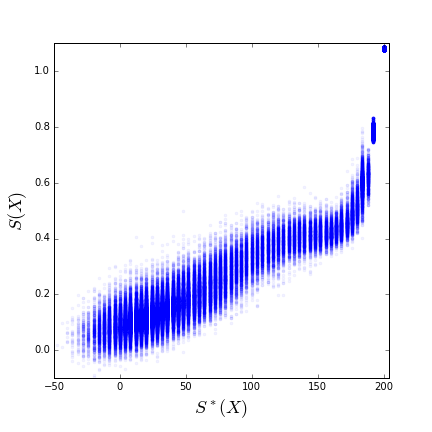}
        \caption{} \label{fig: Ising scatterplot, deeplearn}
    \end{subfigure}
    ~
    \begin{subfigure}[b]{0.53\textwidth}
        \includegraphics[height=2.5in]{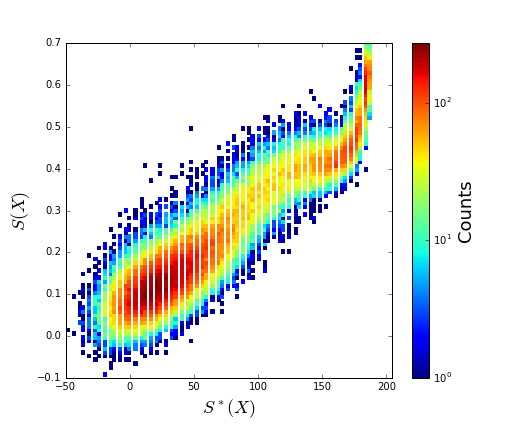}
        \caption{} \label{fig: Ising heatmap, deeplearn}
    \end{subfigure}
    \caption{DNN-based summary statistic $S$ v.s. sufficient statistic $S^*$ on the test dataset. (a) Scatterplot of $10^5$ test instances. Each point represents to $(S^*(x),S(x))$ for a single test instance $x$. (b) Heatmap excluding instances with $S^*(x)=192,200$.} 
\end{figure}

\begin{figure}[h!]
    \centering
    \begin{subfigure}[b]{0.42\textwidth}
        \includegraphics[height=2.5in]{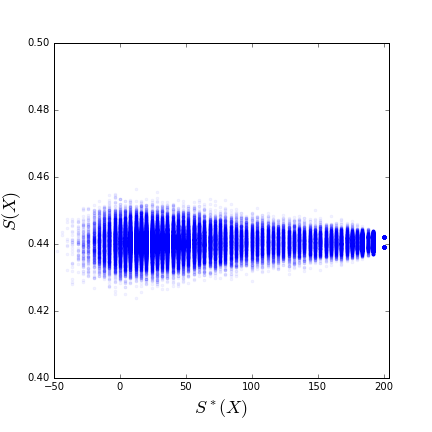}
        \caption{} \label{fig: Ising scatterplot, semiauto}
    \end{subfigure}
    ~
    \begin{subfigure}[b]{0.53\textwidth}
        \includegraphics[height=2.5in]{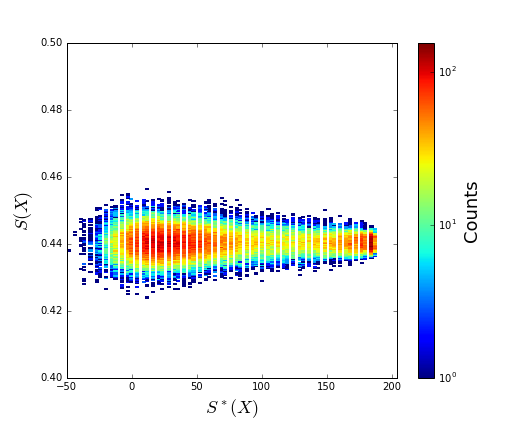}
        \caption{} \label{fig: Ising heatmap, semiauto}
    \end{subfigure}
    \caption{Summary statistic $S$ constructed by the semi-automatic method v.s. sufficient statistic $S^*$ on the test dataset. (a) Scatterplot of $10^5$ test instances. Each point represents to $(S^*(x),S(x))$ for a single test instance $x$. (b) Heatmap excluding instances with $S^*(x)=192,200$.}  \label{fig: Ising prediction, semiauto}
\end{figure}

The semi-automatic method constructs a summary statistic that fails to approximate $\mathbb{E}_\pi[\theta|X]$ (an increasing function of $S^*(X)$) but centers around the prior mean $\theta_c = 0.4406$ (Figure \ref{fig: Ising prediction, semiauto}). This is not surprising since the semi-automatic construction, a linear combination of $X_j$, is unable to capture the non-linearity of $\mathbb{E}_\pi[\theta|X]$.

ABC posterior distributions were obtained with the sufficient statistic $S^*$ and the summary statistics $S$ constructed by DNN and the semi-automatic method. For the sufficient statistic $S^*$, we set the tolerance level $\epsilon = 0$ so that the ABC posterior sample follows the exact posterior $\pi(\theta|X=x_{obs})$. For each summary statistic $S$, we set the tolerance threshold $\epsilon$ small enough so that $0.1\%$ of $10^6$ proposed $\theta'$s were accepted. We repeated the comparison for four different observed data $x_{obs}$, generated from $\theta = 0.2,0.4,0.6,0.8$, respectively; in each case, we compared the posterior obtained from $S^*$ with the posteriors obtained from $S$, in Figure~\ref{fig: Ising posterior}.

\begin{figure}[h!]
\centering
\includegraphics[width=\textwidth]{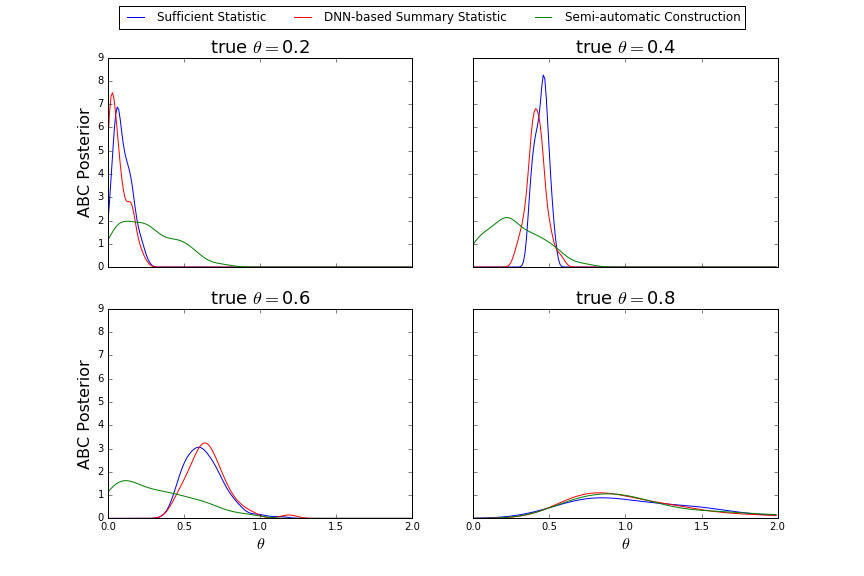}
\caption{ABC posterior distributions for $x_{obs}$ generated with true $\theta = 0.2, 0.4, 0.6, 0.8$.}
\label{fig: Ising posterior}
\end{figure}

We highlights the case with true $\theta=0.8$ (lower-right subplot in Figure~\ref{fig: Ising posterior}). Since with high probability the spins $X_i$ have the same sign when $\theta$ is large, it becomes difficult to distinguish different values of $\theta$ above the critical point $\theta_c$ based on the data $x_{obs}$. Hence we should expect the posterior to be small below $\theta_c$ and have a similar shape to the prior distribution above $\theta_c$. All three ABC posteriors demonstrate this property.
\par \bigskip

\setcounter{chapter}{4}
\setcounter{equation}{0} 
\noindent {\bf 4. Example: Moving Average of Order 2}
\par \smallskip
\noindent {\bf 4.1 ABC and Summary Statistics}
\par \smallskip
The moving-average model is widely used in time series analysis. With $X_1,\hdots, X_p$ the observations, the moving-average model of order $q$, denoted by $\text{MA}(q)$, is given by
\begin{equation*}
X_j = Z_j + \theta_1Z_{j-1} + \theta_2 Z_{j-2} + ... +
\theta_qZ_{j-q}, \quad j=1,...,p,
\end{equation*}
where $Z_j$ are unobserved white noise error terms. We took $Z_j \overset{i.i.d.}{\sim} N(0,1)$ in order to enable exact calculation of the posterior distribution $\pi(\theta|x_{obs})$, and then evaluation of the ABC posterior distribution. If the $Z_j$'s are non-Gaussian, the exact posterior $\pi(\theta|x_{obs})$ is computationally intractable, but ABC is still applicable.

Approximate Bayesian Computation has been applied to study the posterior distribution of the $\text{MA}(2)$ model using the auto-covariance as the summary statistic (\cite{marin2012approximate}). The auto-covariance is a natural choice for the summary statistic in the \text{MA}(2) model because it converges to a one-to-one function of underlying parameter $\theta = (\theta_1, \theta_2)$ in probability as $p \to \infty$ by the Weak Law of Large Numbers,
\begin{align*}
AC_1 &= \frac{1}{p-1}\sum_{j=1}^{p-1} X_{j}X_{j+1} \to \mathbb{E}(X_{1}X_{2}) = \theta_1+\theta_1\theta_2\\
AC_2 &= \frac{1}{p-2}\sum_{j=1}^{p-2} X_{j}X_{j+2} \to \mathbb{E}(X_{1}X_{3}) = \theta_2.
\end{align*}
\par \bigskip

\noindent {\bf 4.2 Experimental Design}
\par \smallskip
The $\text{MA}(2)$ model is identifiable over the triangular region
$$\theta_1 \in [-2,2], \quad \theta_2 \in [-1,1], \quad \theta_2 \pm
\theta_1 \geq -1,$$
so we took a uniform prior $\pi$ over this region, and generated the training, validation, testing sets of size $10^6$,  $10^5$, $10^5$, respectively. Each instance was a time series of length $p=100$. 

A 3-layer DNN with $100$ neurons on each hidden layer was trained to predict $\theta$ from $X$. For purposes of comparison, we constructed the semi-automatic summary statistic by fitting linear regression of $\theta$ on candidate summary statistics - polynomial bases $X_j,X_j^2,X_j^3,X_j^4$. We also test an FFNN with a single hidden layer of 100 neurons and considered the regularization technique (\ref{eqn: loss with penality}) with $\lambda = 0.001$. The FFNN used here is different than that used by \cite{blum2010non}. See details in Section 1.3.

Next, we generated some true parameters $\theta$ from the prior, drew the observed data $x_{obs}$, and numerically computed the exact posterior $\pi(\theta|x_{obs})$. Then we computed ABC posteriors using the auto-covariance statistic $(AC_1,AC_2)$, the DNN-based summary statistics $(S_1,S_2)$, and the semi-automatic summary statistic. The resulting ABC posteriors are compared to the exact posterior and evaluated in terms of the accuracies of the posterior mean of $\theta$, the posterior marginal variances of $\theta_1, \theta_2$, and the posterior correlation between $(\theta_1,\theta_2)$.
\par \bigskip

\noindent {\bf 4.3 Results}
\par \smallskip
\begin{table}[h!]
\centering
\begin{tabular}{l | c c c c c}
\hline
& \multicolumn{2}{c}{Training RMSE} & \multicolumn{2}{c}{Testing RMSE} & Time (s)\\
Method & $\theta_1$ & $\theta_2$ & $\theta_1$ & $\theta_2$ & \\
\hline
Semi-automatic & 0.8150 & 0.3867 & 0.8174 & 0.3857 & 45.63\\
FFNN, $\lambda=0$  & 0.1857 & 0.2091 & 0.1884 & 0.2115 & 543.42\\
DNN, $\lambda=0$ & \textbf{0.1272} & \textbf{0.1355} & \textbf{0.1293} & \textbf{0.1378} & 1402.02\\
\hline
FFNN, $\lambda=0.001$ & 0.2642 & 0.2522 & 0.2679 & 0.2546 & 432.27\\
DNN, $\lambda=0.001$ & 0.1958 & 0.1939 & 0.1980 & 0.1956 & 1282.66\\
\hline
\end{tabular}
\caption{The root-mean-square error (RMSE) and training time of the semi-automatic, FFNN and DNN methods to predict $(\theta_1,\theta_2)$ given $X$. $\lambda$ is the penalty coefficient in the regularized objective function (\ref{eqn: loss with penality}). Stochastic gradient descent fits each FFNN or DNN by 200 full passes (epochs) through the training set.} \label{tab: MA2 prediction}
\end{table}

Again DNN learns a better prediction rule than the semi-automatic method and FFNN, but takes more training time (Table \ref{tab: MA2 prediction}, Figures \ref{fig: MA2 prediction, deeplearn} and \ref{fig: MA2 prediction, semiauto}). The regularization technique does not improve the performance.
\begin{figure}[h!]
    \centering
    \begin{subfigure}[b]{0.45\textwidth}
        \includegraphics[width=2.8in]{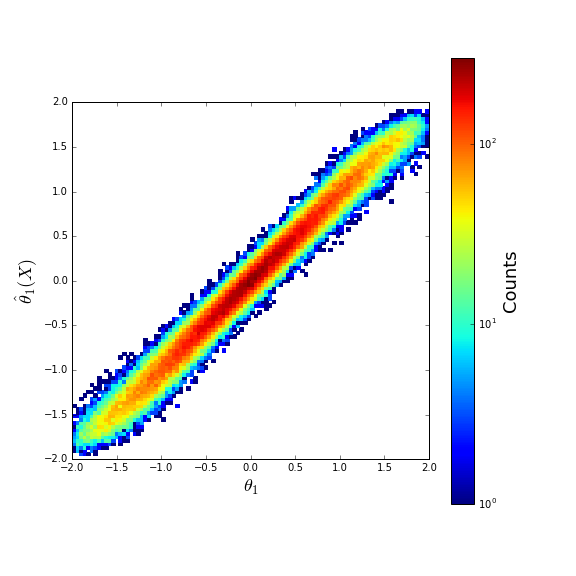}
    \end{subfigure}
    ~
    \begin{subfigure}[b]{0.45\textwidth}
        \includegraphics[width=2.8in]{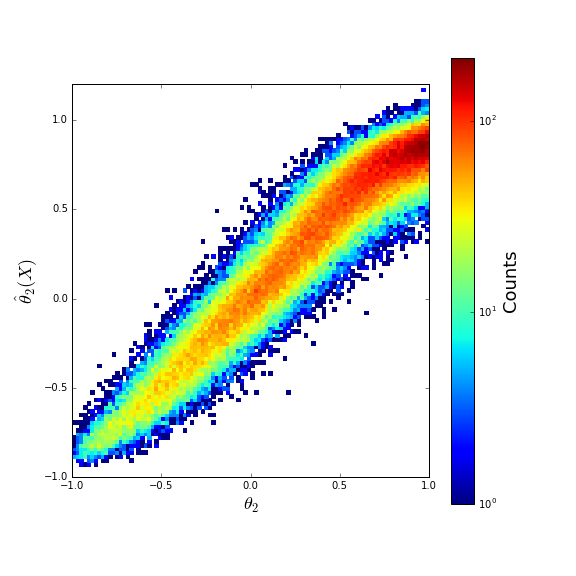}
    \end{subfigure}
    \caption{DNN predicting $\theta_1,\theta_2$ on the test dataset of $10^5$ instances.} \label{fig: MA2 prediction, deeplearn}
\end{figure}

\begin{figure}[h!]
    \centering
    \begin{subfigure}[b]{0.45\textwidth}
        \includegraphics[width=2.8in]{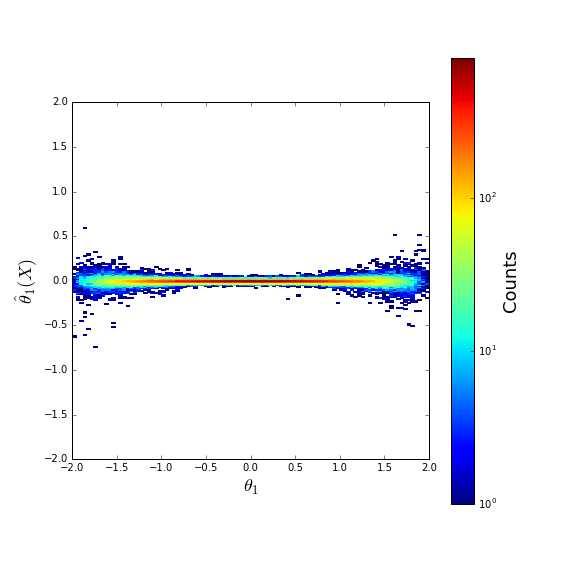}
    \end{subfigure}
    ~
    \begin{subfigure}[b]{0.45\textwidth}
        \includegraphics[width=2.8in]{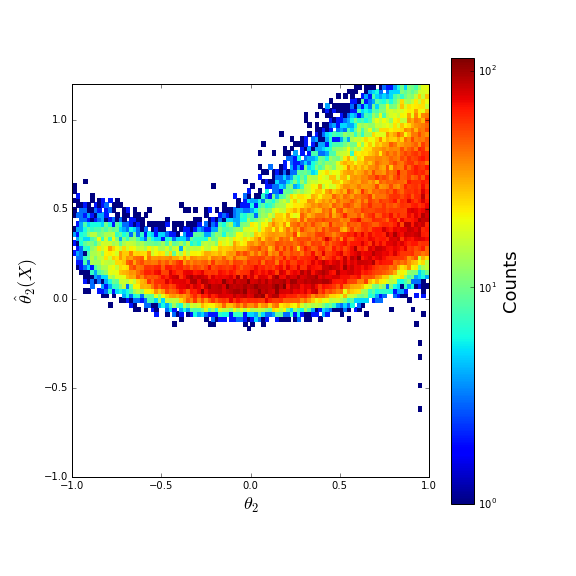}
    \end{subfigure}
    \caption{DNN predicting $\theta_1,\theta_2$ on the test dataset of $10^5$ instances.}  \label{fig: MA2 prediction, semiauto}
\end{figure}

We ran ABC procedures for an observed datum $x_{obs}$ generated by true parameter $\theta = (0.6,0.2)$, with three different choices of summary statistic: the DNN-based summary statistic, the auto-covariance, and also the semi-automatic summary statistic. The tolerance threshold $\epsilon$ was set to accept $0.1\%$ of $10^5$ proposed $\theta'$ in ABC procedures. Figure~\ref{fig: MA2 posterior} compares the ABC posterior draws to the exact posterior which is numerically computed.
\begin{figure}[h!]
\centering
\includegraphics[width=3.78in]{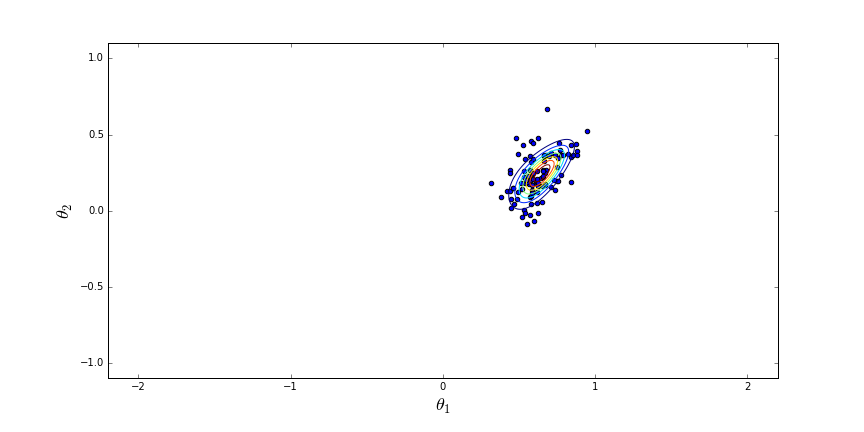}\\
\includegraphics[width=3.78in]{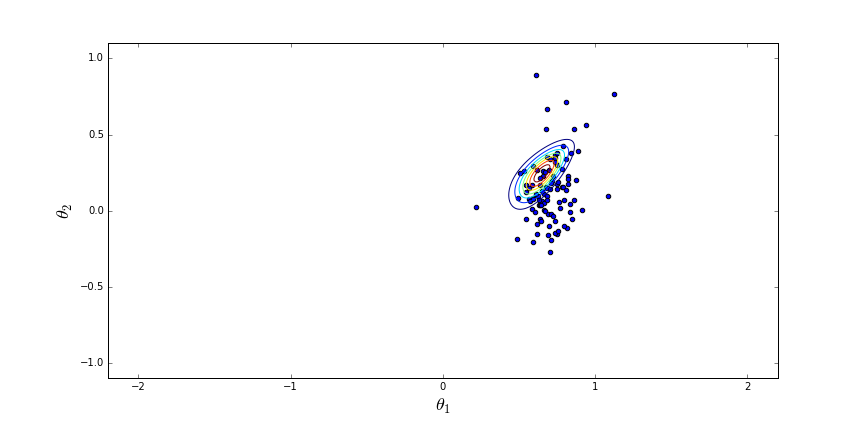}\\
\includegraphics[width=3.78in]{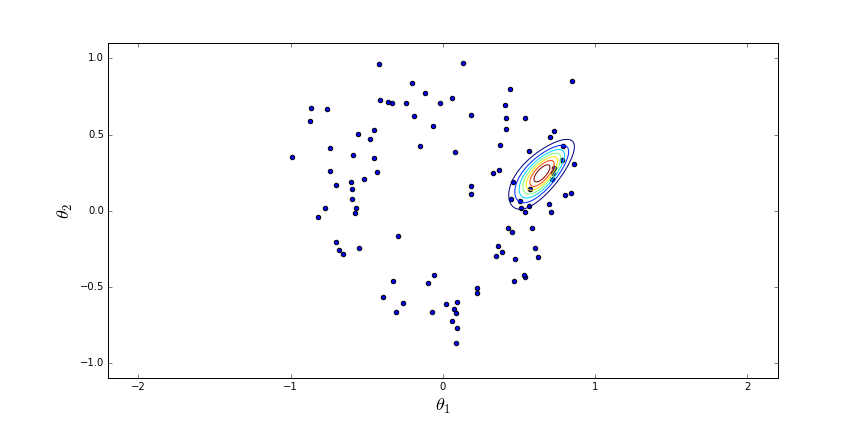}
\caption{ABC posterior draws (top: DNN-based summary statistics, middle: auto-covariance, bottom: semi-automatic construction) for observed data $x_{obs}$ generated with $\theta = (0.6,0.2)$, compared to the exact posterior distribution contours.} 
\label{fig: MA2 posterior}
\end{figure}

The DNN-based summary statistic gives a more accurate ABC posterior than either the ABC posterior obtained by the auto-covariance statistic or the semi-automatic construction. One of the important features of the DNN-based summary statistic is that its ABC posterior correctly captures the correlation between $\theta_1$ and $\theta_2$, while the auto-covariance statistic and the semi-automatic statistic appear to be insensitive to this information (Table~\ref{MA2 abc stats}).
\begin{table}[h!]
\begin{center}
\begin{tabular}{l | c c c c c}
\hline
Posterior & mean($\theta_1$) & mean($\theta_2$) & std($\theta_1$) & std($\theta_2$) & cor($\theta_1,\theta_2$)\\
\hline
Exact  & 0.6418 & 0.2399 & 0.1046 & 0.1100 & 0.6995\\
ABC (DNN) & 0.6230 & 0.2300 & 0.1210 & 0.1410 & 0.4776\\
ABC (auto-cov) & 0.7033 & 0.1402 & 0.1218 & 0.2111 & 0.2606\\
ABC (semi-auto) & 0.0442 &  0.1159 & 0.5160 & 0.4616 & -0.0645\\
\hline
\end{tabular}
\end{center}
\caption{Mean and covariance of exact/ABC posterior distributions for observed data $x_{obs}$ generated with $\theta = (0.6,0.2)$ in Figure \ref{fig: MA2 posterior}.} \label{MA2 abc stats}
\end{table}

We repeated the comparison for 100 different $x_{obs}$. As Table~\ref{MA2 abc stats 100} shows, the ABC procedure with the DNN-based statistic better approximates the posterior moments than those using the auto-covariance statistic and the semi-automatic construction.

\begin{table}[h!]
\begin{center}
\begin{tabular}{l | c c c c c}
\hline
Posterior & \multicolumn{5}{c}{MSE for}\\
& mean($\theta_1$) & mean($\theta_2$) & std($\theta_1$) & std($\theta_2$) & cor($\theta_1,\theta_2$)\\
\hline
ABC (DNN) & 0.0096 & 0.0089 & 0.0025 & 0.0026 & 0.0517\\
ABC (auto-cov) & 0.0111 & 0.0184 & 0.0041 & 0.0065 & 0.1886\\
ABC (semi-auto) & 0.5405 &  0.1440 & 0.4794 & 0.0891 & 0.3116\\
\hline
\end{tabular}
\end{center}
\caption{Mean squared error (MSE) between mean and covariance of exact/ABC posterior distributions for 100 different $x_{obs}$.} \label{MA2 abc stats 100}
\end{table}
\par
\par \bigskip

\setcounter{chapter}{5}
\setcounter{equation}{0} 
\noindent {\bf 5. Discussion}\\
We address how to automatically construct low-dimensional and informative summary statistics for ABC methods, with minimal need of expert knowledge. We base our approach on the desirable properties of the posterior mean as a summary statistic for ABC, though it is generally intractable. We take advantage of the representational power of DNNs to construct an approximation of the posterior mean as a summary statistic.

We only heuristically justify our choice of DNNs to construct the approximation but obtain promising empirical results. The Ising model has a univariate sufficient statistic that is the ideal summary statistic and results in the best achievable ABC posterior. It is a challenging task to construct a summary statistic akin to it due to its high non-linearity and high-dimensionality, but we see in our experiments that the DNN-based summary statistic approximates an increasing function of the sufficient statistic. In the moving-average model of order 2, the DNN-based summary statistic outperforms the semi-automatic construction. The DNN-based summary statistic, which is automatically constructed, outperforms the auto-covariances; the auto-covariances in the MA(2) model can be transformed to yield a consistent estimate of the parameters, and have been widely used in the literature.

A DNN is prone to overfitting given limited training data, but this is not an issue when constructing summary statistics for ABC. In the setting of Approximate Bayesian Computation, arbitrarily many training samples can be generated by repeatedly sampling $(\theta^{(i)},X^{(i)})$ from the prior $\pi$ and the model $\mathcal{M}$. In our experiments, the size of the training data ($10^6$) is much larger than the number of parameters in the neural networks ($10^4$), and there is little discrepancy between the prediction error losses on the training data and the testing data. The regularization technique does not significantly improve the performance.

We compared the DNN with three hidden layers with the FFNN with a single hidden layer. Our experimental comparison indicates that FFNNs are less effective than DNNs for the task of summary statistics construction.

\vskip 14pt
\noindent {\large\bf Supplementary Materials}\\
The supplementary materials contain an extension of Theorem \ref{Theorem1} and show the convergence of the posterior expectation of $b(\theta)$ under the posterior obtained by ABC using $S_b(X) = \mathbb{E}_\pi\left[b(\theta)|X\right]$ as the summary statistic. This extension establishes a global approximation to the posterior distribution.
Implementation details of backpropagation and stochastic gradient descent algorithms when training deep neural network are provided. The derivatives of squared error loss function with respect to network parameters are computed. They are used by stochastic gradient descent algorithms to train deep neural networks.
\par
\vskip 14pt
\noindent {\large\bf Acknowledgements}

The authors gratefully acknowledge the National Science Foundation grants DMS1407557 and DMS1330132.
\par

\markboth{\hfill{\footnotesize\rm Bai Jiang, Tung-yu Wu, Charles Zheng and Wing Wong} \hfill}
{\hfill {\footnotesize\rm Learning Summary Statistic for ABC via DNN} \hfill}

\bibhang=1.7pc
\bibsep=2pt
\fontsize{9}{14pt plus.8pt minus .6pt}\selectfont
\renewcommand\bibname{\large \bf References}





\end{document}